
\documentclass[sigconf, nonacm]{acmart}
\usepackage{amsthm}
\usepackage{graphics}
\usepackage{algorithm2e}
\RestyleAlgo{boxruled}

\usepackage{breakcites}

\usepackage{textcomp}
\usepackage{subcaption}
\usepackage{times}
\usepackage{soul}
\usepackage{graphicx}
\usepackage{bbding} 
\usepackage{xcolor}
\usepackage{color, colortbl}
\usepackage{enumitem}
\usepackage[english]{babel}
\usepackage{amsfonts}
\usepackage{helvet}
\usepackage{courier}
\usepackage{multirow}
\usepackage{booktabs}
\usepackage{mathtools}

\usepackage{placeins}
\usepackage{amsmath}
\newsavebox{\measurebox}
\sloppy

\newtheorem{rem}{Remark}

\newtheorem{proof}{Proof}
\newtheorem{LEM}{Lemma}[section]

\newcommand{\method}{\textsc{MDistrib}}
\newcommand{\methodmidas}{\textsc{MIDAS}}

\newcommand{\mycaption}{%
\ifx \@captype \@undefined \@latex@error {\noexpand \caption outside float}\@ehd \expandafter \@gobble \else \refstepcounter \@captype \expandafter \@firstofone \fi {\@dblarg {\@caption \@captype }}%
}%

\newcommand*\xor{\oplus}






\title{Distributed Anomaly Detection in Edge Streams using Frequency based Sketch Datastructures}

\author{
Prateek Chanda%
}
\email{prateek.dd2015@cs.iiests.ac.in}
\email{v-pratec@microsoft.com}
\affiliation{%
  \institution{Microsoft Research India}
  \streetaddress{Vigyan 1st floor, 9, Lavelle Road, Ashok Nagar,Bangalore}
  \city{India}
  \state{Karnataka}
}

\author{
Malay Bhattacharya
}
\email{malaybhattacharyya@isical.ac.in}
\affiliation{%
  \institution{Indian Statistical Institute}
  \streetaddress{Barrackpore Trunk Rd, Dunlop, Bonhooghly Government Colony, Baranagar}
  \city{India}
  \state{West Bengal}
}

\begin{document}

\begin{abstract}
Often logs hosted in large data centers represent network traffic data over a long period of time. For instance, such network traffic data logged via a TCP dump packet sniffer (as considered in the 1998 DARPA intrusion attack) included network packets being transmitted between computers. While an online framework is necessary for detecting any anomalous or suspicious network activities like denial of service attacks or unauthorized usage in real time, often such large data centers log data over long periods of time (e.g., TCP dump) and hence an offline framework is much more suitable in such scenarios. Given a network log history of edges from a dynamic graph, how can we assign anomaly scores to individual edges indicating  suspicious events with high accuracy using only constant memory and within limited time than state-of-the-art methods? We propose MDistrib and its variants which provides (a) faster detection of anomalous events via distributed processing with GPU support compared to other approaches, (b) better false positive guarantees than state of the art methods considering fixed space and (c) with collision aware based anomaly scoring for better accuracy results than state-of-the-art approaches. We describe experiments confirming that MDistrib is more efficient than prior work.
\end{abstract}

\maketitle


\section{Introduction}

Anomaly detection in dynamic complex graphs has a wide range of applications including intrusion detection, denial of service attacks and in the financial sector. Methods like \cite{su2019robust}, \cite{zheng2021generative}, \cite{he2020mtad} provide anomaly detection framework mostly in an offline scenario for timeseries multivariate graphs using recent advancements like Graph Neural Networks while some methods like \cite{odiathevar2019hybrid} provide a framework for a hybrid offline-online analysis for finding anomalies using Support Vector Machines (SVM).

Most large data centers include historical logs of network traffic data over long stretches of time like monthwise or daywise network activity. Such data might also be distributed across several nodes / partitions, for instance in a high performance computing cluster. Since often such distributed data is time series in nature, hence it is challenging to process and analyze them in parallel also since there might exist some inter-dependency across several nodes (see Fig.~\ref{fig:distributed nodes setting}).

\begin{figure}[h!]
    \centering
    \includegraphics[width=0.95\linewidth]{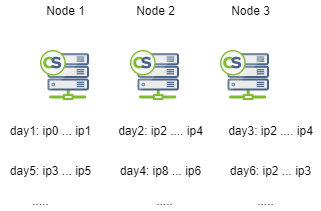}
    \caption{Network traffic data distributed across different nodes.}
    \label{fig:distributed nodes setting}
\end{figure}

While in an online setting, there has been several proposals in recent years that can detect anomalous edges with a high accuracy like MIDAS \cite{bhatia2020midas}, Spotlight
 \cite{eswaran2018spotlight} , there has not been any rigorous work done for an offline setting phase. While \cite{bhatia2020midas} performs better than all the recent state of the art methods and can even process edges in the order of millions in seconds. But in the case of large network logs that are hosted over months time, and which can consist of billions of edges, methods like MIDAS still take considerable time to execute. (For example, for 1 billion edges, MIDAS takes 250 seconds). 

In this paper, we thus propose a distributed version of anomalous edge detection for large time series graphs using constant memory space called $\method$ as an extension to previous work on $\methodmidas$ \cite{bhatia2020midas}, thereby reducing overall edge processing time and faster execution. In addition, we extend our method for any generalized frequency based sketch data structure with better theoretical guarantees for frequency estimates and better theoretical bounds on false positive probability as compared to MIDAS. 

Our main contributions in the paper are as follows:
\begin{enumerate}
\item\textbf{Distributed Framework:} We propose a distributed framework for anomaly detection in large graphs with better accuracy and performance than state-of-the-art approaches.
\item \textbf{Constant Memory Space:} The corresponding framework provides a constant space solution to anomaly detection via using sketch based data structures. 
\item \textbf{Better Theoretical Guarantees:} In subsequent theorems we showcase how the false positive probability is more tightly bound in case of $\method$ as compared to $\methodmidas$, while other methods in Table~\ref{tab:comparison} does not showcase such theoretical bounds.

\item \textbf{Extension to Generalized Frequency based sketches} We describe our model based on a generalized frequency sketch and then showcase it via Count-Min-Sketch : $\method_{CMS}$ and Apache Frequent Item Sketch : $\method_{FIS}$.

\item \textbf{GPU enabled anomaly detection} In addition to the distributed framework, we also extend $\method$ to support GPU and showcase via further experiments how $\method$ performs better in a GPU enabled environment.

\item \textbf{Effectiveness} Our experimental results also showcases that $\method$ and its variants show a high performance in terms of execution (382 - 815 times faster than other baseline algorithms ) and a high accuracy comparable or better (42\% - 48\%) than state of the art methods.

\end{enumerate}

\begin{table}[!ht]
\centering
\caption{Comparison of relevant methods}
\label{tab:comparison}
\noindent\setlength\tabcolsep{2pt}%
\begin{tabular}{@{}rcccccc|c@{}}
\toprule
 & \rotatebox{90}{ SedanSpot '18} 
 & \rotatebox{90}{Spotlight}
 & \rotatebox{90}{PENminer}
 & \rotatebox{90}{F-Fade}
 & \rotatebox{90}{MIDAS-R}
 & {\bf \rotatebox{90}{\method}} \\ 
 \midrule
\textbf{Microcluster Detection} & - & - & -  & - & - & \Checkmark & \CheckmarkBold \\
\textbf{Constant Memory} & \Checkmark & \Checkmark & \Checkmark & - & \Checkmark & \Checkmark  & \CheckmarkBold \\
\textbf{Constant Update Time} & \Checkmark & \Checkmark & \Checkmark & \Checkmark &  \Checkmark &  \Checkmark &  \CheckmarkBold \\
\textbf{Distributed Processing} & - & - & - & - & - & - & \CheckmarkBold \\
\textbf{GPU support} & - & - & - & - & - & - & \CheckmarkBold \\
\bottomrule
\end{tabular}
\end{table}

\section{Related Work}
There have been substantial amount of works done in the area of anomaly detection from static network data \cite{chandola2009anomaly, bhuyan2013network}, including the current focus on deep learning based predictions \cite{pang2021deep}. The recent deep learning approaches toward anomaly detection have shown immense promise in learning feature representations or anomaly scores via neural networks. Many of these models, including the GAN-based models like EBGAN \cite{zenati2018efficient}, fast AnoGAN \cite{schlegl2019f} and BiGAN \cite{donahue2016adversarial}, demonstrate a significantly better performance than conventional approaches in diverse real-life applications. However, there exist some limitations that are encountered by deep learning methods \cite{pang2021deep}. The most important of this is dealing with high-dimensional data. Anomaly detection from large-scale data
is challenging with deep learning approaches. Moreover, given the fact that, as there are limited amount of labeled anomaly data, it remains a challenge to learn expressive normality/abnormality representations from the data. Often in an offline setting, models like neural networks may provide better accuracy but take a huge amount of time to compute.

The recent interest in anomaly detection from data streams has also contributed quite a few studies in the literature \cite{tam2019anomaly}. The studies on anomaly detection from streaming graphs is relatively newer. Prior to this, when anomaly detection was confined to only static data, there were three types of variants of this problem. This was based on the anomalous entities appearing in the graph, be it nodes, edges or subgraphs. However, based on the way the graph is received in a streaming setting, additional problems emerged in due course of time.

A recent study by Liu et al. performs streaming anomaly detection with frequency and patterns \cite{liu2021isconna}. Their work proposes online algorithms (termed as Isconna-EO and Isconna-EN) that measure the consecutive presence and absence of edge records. Though the variants of Isconna demonstrates a comparatively better accuracy than the state-of-the-art, however it suffers from parameter tuning. Very recently, Chang et al. have proposed a frequency factorization approach (termed as F-fade) for anomaly detection in edge streams \cite{chang2021f}.

The novelty that we introduce in the current paper is providing a fast and scalable parallel framework for distributed sketch based anomaly detection in edge streams in GPU environment with collision aware optimization.

\section{Problem Setting}
Let there be a dynamic time-evolving graph $G$ characterized by the stream of edges $\varepsilon = (e_1, e_2, \dots)$, $e_i$ denoting $(u_i, v_i, t_i)$. Here, $u_i$ and $v_i$ symbolize source and destination nodes respectively and $e_i$ denotes the incoming edge between them in timestamp $t_i$. Given $G$, the concern is how can we detect anomalous edges and micro-cluster activities in a distributed fashion. In a network setting, $u_i$ and $v_i$ may indicate source and destination IP addresses respectively and edge denoting a connection rpapequest occurring at timestamp $t_i$. Also the graph $G$ can be modelled as a directed multi-graph indicating that $(u_i,v_i) \neq (v_i, u_i)$ and edges might appear more than once between any two specific nodes.

\subsection{Temporal Scoring}

We first present here the temporal scoring approach proposed by our baseline method \cite{bhatia2020midas}.

\begin{figure}[h!]
    \centering
    \includegraphics[width=0.8\linewidth]{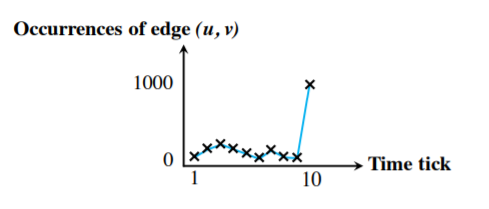}
    \caption{ Time series of a single source-destination pair
(u, v), with a large burst of activity at time tick 10.}
    \label{fig:midasMotivating}
\end{figure}

We start by proposing a motivating example via Figure \ref{fig:midasMotivating} on how MIDAS computes the anomaly score for each incoming edges. As per Figure \ref{fig:midasMotivating}, considering a single source destination pair ($u$,$v$), there is a sudden occurrence of large burst of activity at time instant t=10. So the main motivation is to detect the immediate displacement in the item frequency from its corresponding mean. Hence a  higher the displacement indicates a higher chances of being an actual anomaly. Taking forward this intuition, \cite{bhatia2020midas} maintains two types of Count-Min Sketch (CMS) \cite{cormode2005improved} data structures. Considering $s_{uv}$ as the total number of edges from node $u$ and $v$ up to current time tick $t$, this can be stored in a single CMS called Current Sum Sketch (CS) and likewise all $s_{uv}$ for all possible combinations of nodes will be stored.

Similarly for the current number of edges between node $u$ and $v$ existing during the current time tick $t$ can be denoted as $a_{uv}$, which can be again similarly tracked in a separate CMS called Current Edge Sketch (CE). The major difference between the two sketches lies only in the fact that after every time tick $t$, the values in CE needs to be resetted back to 0 in order to accommodate new information for the next timestamp.  

So now, given the estimated frequencies $\hat{s_{uv}}$ and $\hat{a_{uv}}$, how can we assign corresponding anomaly scores to the edge $u-v$? MIDAS proposes to use chi-squared goodness-of-fit test as the hypothesis testing framework to compute such anomaly scores. A major reason to go forward using chi-squared test is due to avoid the assumption that the edge frequencies follow some distribution and that using that underlying distributing information some likelihood estimate for the edge indicating anomaly can be formulated.

The past edges can be divided into two classes: current time tick (t=10) i.e. $a_{uv}$ and all past time ticks (t<10) i.e $s_{uv} - a{uv}$.

Hence, the chi-squared statistic which is defined as sum over categories of 
$\frac{(observed - expected)^2}{expected}$ can be further written as

\begin{equation}
\begin{aligned}
X^2 = \frac{({observed}_{curr_t} - {expected}_{curr_t})^2}{{expected_{curr_t}}} \\ 
+ \frac{({observed}_{past_t} - {expected}_{past_t})^2}{{expected_{past_t}}} \\
= (~{a_{uv}} -  \frac{~{s_{uv}}}{t})^2 \frac{t^2}{~s_{uv}(t-1)}
\end{aligned}    
\label{chi-square : derivation}
\end{equation}

\begin{rem}
We have avoided the derivation of Equation \ref{chi-square : derivation} for simplicity due to the same derivation structure in \textit{Hypothesis testing framework} in \cite{bhatia2020midas}. 
\end{rem}

Therefore, given an arriving edge indicated by tuple $(u,v,t)$, our anomaly score is computed as:

 \[  score((u,v,t)) = (\hat{a_{uv}} -  \frac{\hat{s_{uv}}}{t})^2 \frac{t^2}{\hat{s_{uv}}(t-1)}    \]

\section{Proposed Distributed Microcluster Detection}

\begin{figure}[h!]
\centering
\includegraphics[width=1  \columnwidth]{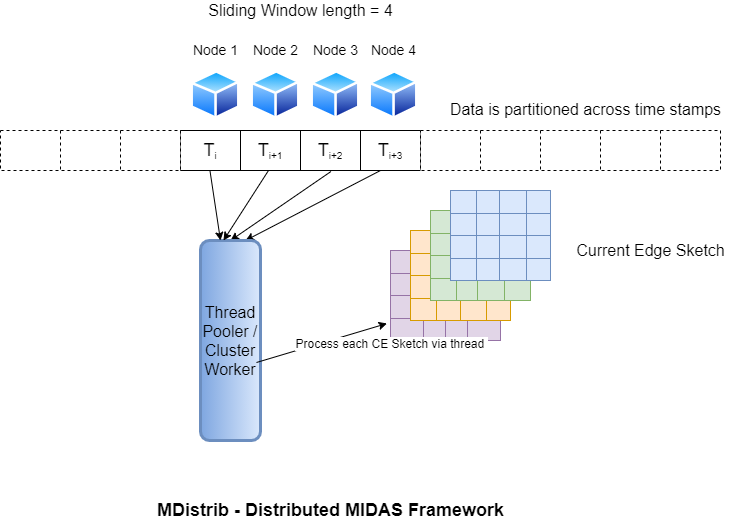}
\caption{\method
Architecture}
\label{fig:midasDsArch}
\end{figure}

In the offline phase, we may consider a scenario where there exists network logs consisting of incoming requests during the past month and the data is partitioned across nodes/clusters by its corresponding timestamp (daywise).

Instead of a linear scan, our proposed framework models the data stream $S = (d_1, d_2, d_3, \dots d_t, \dots)$ s.t. $d_k = (i_k, t_k)$, where $i_k$ indicates the corresponding item and $t_k$ denotes the timestamp. Considering there are $n$ timestamps, for each single timestamp there might be multiple items which occur. Each of the timestamps can be considered for parallel processing rather than serially processing. Considering there are $m+1$
hardware threads in a multi-processor, the stream can be partitioned into m sub-streams $S_1, S_2,S_3, \dots S_m$.

For each substream, we consider only one frequency based sketch data structure denoted as $CE$. This is employed for maintaining the current frequency per item. And at the end of $m$ sub-streams we maintain another sketch denoted as $CS_t$ which stores the sum total frequency corresponding to each edge uptil all such previous substreams.
Each such $CE$ sketch data structure indexed by timestamp is stored under a particular thread Id.

\begin{figure}[h!]
\centering
\includegraphics[height=0.4\columnwidth, width=0.98\columnwidth]{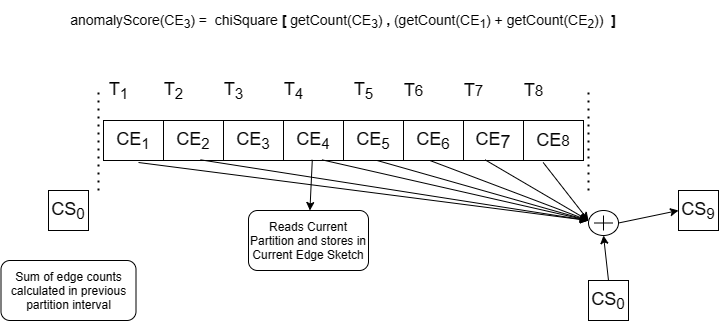}

\caption{Calculation of CE and CS sketches.}
\label{fig: Collision CMS}
\end{figure}


Since $CS$ stores the total sum frequency for a particular source destination pair ($u_i, v_i$), for a particular time instant $T_i$, it can be written as follows $CS(T_i) =\sum_{j=0}^{j=i-1}CE(T_j)$ i.e summation over all previous \textit{current Edge} frequency sketches.

\textbf{Algorithm}
At the beginning, we propose the formal algorithm for the offline phase of our proposed \method. Note that, sketch data structures mentioned here are general frequency based sketch data structures that share mergeability property. In order to be useful in large distributed computing environments, the sketches must be mergeable without additional loss of accuracy. This can be formally stated as $sketch(A + B) \approx sketch(A) \cup sketch(B)$.

\begin{algorithm}
\KwIn{Stream of graph edges distributed across $K$ partitions}
\KwOut{Anomaly score of individual edges}
$\text{Initialize } m \text{ Current Edge(CE) Sketches}$ \\
$\text{Initialize } m \text{ Current Sum(CS) Sketches}$ \\ 
\While{$\text{All K partitions are visited}$}{
            $\text{Assign worker per m  partition}$\\ 
            \For{$m$ partitions in parallel}{                 
                $\text{Execute }  CalculateCE(partition^{m}_{i}) \text{ in parallel}$
            }
            \For{$m$ partitions in parallel}{                
                $\text{Execute }   CalculateScore(partition^{m}_{i}) \text{ in parallel}$
            }
            }
\caption{Distributed Microcluster Detection - Offline}
\end{algorithm}

\begin{algorithm}
\SetAlgoLined
\caption{CalculateCE() -- Current Edge Calculation}
\KwIn{Stream of graph edges for Partition id $\theta$ }
\While{$\text{a new edge } (u,v,t_\theta) \text{ arrives}$}{
    $\text{Assign edge } (u,v) \text{ to a thread}$
    $\text{Update } CE^m_{i} \text{ data structures for the new edge } uv$
}
\end{algorithm}

\begin{algorithm}
\caption{CalculateScore() -- Current Edge Score Calculation}
\KwIn{Stream of graph edges for Partition id $\theta$}
\KwOut{Anomaly Scores per edge}
\While{$\text{a new edge } (u,v,t_\theta) \text{ arrives}$}{
     $\text{Assign edge } (u,v) \text{ to a thread}$
     $\text{Perform merge operation for } CS^m_{i} = \sum{CE^m_{i}}$
      $\text{Retrieve updated counts } \hat{a_{uv}} \text{ from } CE^m_{i}$
     $\text{Retrieve updated counts } \hat{s_{uv}} \text{ from } CS^{M}_{i}$\\
    $\textbf{  Compute Edge Scores }$\\
    $score((u,v,t)) = (\hat{a_{uv}} -  \frac{\hat{s_{uv}}}{t})^2 \frac{t^2}{\hat{s_{uv}}(t-1)}$\\
    \textbf{Return} $score$\\
}
\end{algorithm}

\subsection{\method-CMS $\xrightarrow{}$  Distributed MIDAS using Count Min Sketch}
Here, we follow the exact architecture of $\method$  as proposed above and use count min sketch for each of the $m$ \textit{Current Edge Sketch} and \textit{Current Sum Sketch}, considering $m$ partitions, with each having $w$ and $b$ number of hash functions and number of buckets respectively.

\begin{figure}[h!]
    \centering
    \includegraphics[width=0.7\linewidth]{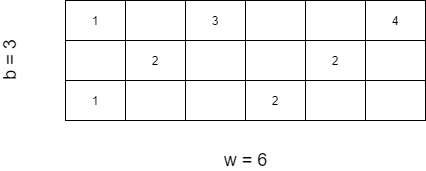}
    \caption{CMS with parameters 6x3}
    \label{fig:cms Basic}
\end{figure}

For example Figure \ref{fig:cms Basic} indicates the basic functionality in a count min sketch where each item can be hashed into the corresponding slots.

\subsection{\method-FIS - Distributed MIDAS using Apache Frequent Item sketch}
 \cite{anderson2017high} introduces Frequent Item Sketch which belongs to a class of Apache Incubator Datasketches that deals with estimating frequency of items in a streaming environment and is mainly aimed at applications like Heavy-Hitter estimation or estimation of counts of objects in a streaming environment. The Apache Incubator Datasketches framework provide a wrapper api for Frequent Item sketch with the mergeability functionality where two Frequent Item Datasketches $f_1$ and $f_2$ can be merged into one frequent item datasketch $f_r$ with all the frequencies remaining intact $f_r = f_1 \xor f_2$.


\subsection{$\method R$ :  Distributed MIDAS-R}
Similar to $\method$, we take into account the idea of relation attribute used in \cite{bhatia2020midas} for both temporal and spatial relations.

\textbf{Temporal Relations}
With the idea of most recent edges being a better representative of the statistic score than previous early edges, for each timestamp id $Tid$ within the partition interval we apply a decay factor $\alpha^{\frac{1}{Tid}}$ where $\alpha \in (0,1)$. This is done keeping into consideration the fact that edges from the recent past should compute to the anomaly score calculation as well.

\textbf{Spatial Relations}

As it is also important to capture large group of spatially nearby edges: such as a high out-degree connection from a single source IP address, following MIDAS's approach we also consider observing \textbf{nodes} with high degree as they occur. For this, we create CMS counters $\hat{a_{u}}$ and $\hat{s_{u}}$ to approximate the number of edges incident on node $u$ at the current time tick $t$ and up till $t$ respectively. Hence, for a new incoming edge $u,v$, we compute the anomaly score for edge $u,v$ as well as for node $u$ and node $v$ and finally take the maximum of the three scores.




\subsection{Time complexity Analysis}
We focus on time complexity analysis and comparison with respect to $\methodmidas$ since $\methodmidas$ has performed significantly well in comparison with SOTA models in terms of time complexity.

Considering we have $K$ partitions, $P_{j+1}, P_{j+2}, \dots P_{j+k}$, where each partition $P_{j+i}$; $\forall{i \in [1,K]}$ belonging to timestamp $T_{j+i}$ includes a $CE$ datasketch. We denote the $CE$ sketch update time as $T_{CE}$. If the average number of items during $T_{j+i}$ is $\mu$, then $T_{CE} \approx \mu$ and the total update time for all K partitions under the MIDAS framework would be $\mu*T_{CE}$. Similarly for calculating the sum of the previous edges seen uptil time $T_{j+i}$ it can be denoted as $T_{CS} = \tau_c * (K-1)$ where $\tau$ indicates the merge time for two datasketches, Since in both \method and \methodmidas there is merging of previous current edge sketches, hence $T_{CS}$ remains the same for both.

Comparing the total computation for both the methods, yields

\[T_{\methodmidas} = \phi[\mu*T_{CE} + \tau_c * (K-1)] \]

\[T_{\method} = \phi[\mu + \tau_c * (K-1) + Th_{AllocationTime}],\]
s.t. $\phi*K \approx C$ where $C$ denotes the number of total partitions/nodes.

Considering thread allocation time $Th_{AllocationTime}$ to be very negligible compared to other factors like large number of streaming data points per partition $\mu$ and large number of partitions $K$, it can be seen  that $T_{\methodmidas}$ is much less than $T_{\method}$.

Further we also propose reading items per partition to be distributed across various subthreads instead of sequential reading, since each edge under a particular partition/ timestamp can be processed independently of the other. As a result, update time for $CE$ denoted by $\mu$ can be further reduced to $\hat{\mu}$ where $\hat{\mu} < \mu$ thus resulting in  much faster execution time. 



\subsection{Better theoretical guarantees}

\textbf{Count Min Sketch Error Estimate}

\begin{LEM}
The error estimate encountered via querying edge counts in $\method$ is less than that of $\methodmidas$ by a factor of ($\sum_xCount - \max(Count)$)
\label{mdsitrib-cs-error-gurantee}
\end{LEM}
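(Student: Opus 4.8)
The plan is to start from the standard point-query guarantee for a Count-Min Sketch and then contrast the error term that governs \methodmidas\ with the one that governs \method. Following the paper's notation of $w$ hash functions and $b$ buckets, recall that if an item $i$ with true count $a_i$ is hashed across the $w$ rows, the estimate returned from a single row $k$ is $\hat{a}_i^{(k)} = a_i + \sum_{j \neq i,\, h_k(j)=h_k(i)} a_j$, so the error is exactly the mass of the \emph{other} items that collide with $i$ in that bucket. Taking expectations over the random hash, the per-row expected error is $\frac{1}{b}\bigl(\sum_x Count - a_i\bigr)$, where $\sum_x Count = \|\mathbf{a}\|_1$ is the total mass. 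First I would write this decomposition cleanly and note that the final estimate $\hat{a}_i = \min_{k \in [w]} \hat{a}_i^{(k)}$ only sharpens it, while the failure probability $\delta$ is controlled identically in both sketches by Markov's inequality over the $w$ rows.

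Next I would specialize the decomposition to the two frameworks. In \methodmidas\ an edge count is queried against a sketch that retains no information about the dominant colliding item, so the collision mass inflating the estimate is bounded only by the full total, giving an error estimate proportional to $\sum_x Count$. For \method\ I would invoke the collision-aware construction: the globally heaviest item (of count $\max Count$) is the dominant collision source, and tracking/discounting it leaves a residual collision mass bounded by $\sum_x Count - \max Count$. Subtracting the two bounds after factoring out the common $\frac{1}{b}$ and the identical confidence term yields the claimed improvement governed by the factor $\bigl(\sum_x Count - \max Count\bigr)$.

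I would then argue that the mergeability property $sketch(A+B) \approx sketch(A) \cup sketch(B)$ ensures that partitioning the stream across the $CE$ sketches and recombining them into $CS$ neither enlarges the collision mass nor disturbs the per-item true counts, so the bucket-level bound lifts to the merged sketch with no additional loss. This step uses only the linearity of the bucket counters under merging, which is exactly the property the algorithm requires of its general frequency sketch.

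The main obstacle I expect is making ``by a factor of'' precise and justifying the discount rigorously. I must show that subtracting $\max Count$ never over-counts, i.e.\ the resulting estimate still dominates the true count with the same failure probability $\delta$, and that the heaviest contributor can be identified per bucket without extra memory that would void the constant-space claim. The delicate case is when the queried item \emph{is} itself the heaviest contributor, where the discount must be handled so the bound stays valid; once this edge case and the tail control over the $w$ rows are in place, the comparison of the two error estimates reduces to a direct subtraction yielding the stated factor.
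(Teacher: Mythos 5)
Your proposal locates the improvement in the wrong place, and the substitute mechanism does not go through. In the paper's argument the quantities $\sum_x Count$ and $\max(Count)$ are not statistics over \emph{items} colliding inside one sketch; they are the total masses $V_1,\dots,V_{\theta-1}$ of the per-partition $CE$ sketches. \methodmidas\ answers the query $\hat{s}_{uv}$ from the merged sum sketch, whose total mass is $\sum_{r=1}^{\theta-1}V_r$, so the standard CMS bound $\hat{a}\le a+\epsilon V$ gives error $\epsilon\sum_r V_r$; \method\ instead queries the $\theta-1$ partition sketches individually, and the paper takes the relevant mass in the bound to be $\max_r V_r$. The claimed gap $\sum_x Count-\max(Count)$ is just the difference between these two masses, obtained from nothing more than non-negativity of the counts. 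Your proof never touches this partition structure, so even if your argument were airtight it would be proving a different statement.

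The mechanism you do propose --- identifying the globally heaviest item as the dominant collision source and discounting its mass from the bucket --- is not something a plain Count-Min Sketch supports, and the obstacles you list at the end are not loose ends but the substance of the claim. A CMS bucket is a single counter; without auxiliary per-bucket heavy-hitter state (which would cost memory the constant-space accounting does not budget for) you cannot determine whether the heaviest item actually collided with the query, and unconditionally subtracting $\max Count$ can drive the estimate below the true count, destroying the one-sided guarantee $a\le\hat{a}$ on which the probability-$(1-\frac{\delta}{2})$ analysis and the later false-positive theorem both rest. The case where the queried item is itself the heaviest makes this concrete. As written, the discount step is neither implementable in the stated model nor shown to preserve the error guarantee, so the proposal has a genuine gap.
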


\begin{proof}
$\method$CMS has the same sketch architecture as $\methodmidas$, but however presents with a better theoretical guarantee as shown below:

As per Therrem~1 \cite{cormode2005improved}, the estimate $\hat{a}$ from a CMS has the following guarantees: $a \leq \hat{a}$ with some probability $1 - \frac{\delta}{2}$,

\begin{equation}
\begin{aligned}
\hat{a} \leq a + \epsilon*V
\label{eq:CMSBounds}
\end{aligned}
\end{equation}

where the values of these error bounds can be chosen by $w = \lceil{\frac{e}{\epsilon}}\rceil$ and $d = \lceil\ln(\frac{1}{\delta})\rceil$, $w$ and $d$ being the dimensions of the CMS. $V$ indicates the total sum of all the counts stored in the data structure.
Again considering $K$ partitions $P_{j+1}, \dots P_{j+k}$ , let $P_{j+\theta}$ where $\theta \in [1, K]$ denote the partition $\theta$. As per $\methodmidas$ the sum sketch for time tick $t$ is computed by summation of all previous current edge sketches $CE$ uptil $t-1$. Hence for partition $\theta$,  $V_{\theta}^{MIDAS}$ can be written as follows.

\[V_{\theta}^{MIDAS} = \sum_{r=1}^{r=\theta-1}V_{r}^{MIDAS} \]

\[V_{\theta}^{MIDAS} = \sum_{r=1}^{r=\theta}\sum_{p=1,q=1}^{w,d}\text{getCount}_{r}(p,q)\]

For \method-$CMS$, however since for the partition $\theta$ we have $\theta - 1$ CE sketches, instead of merging them for computing the same, we can individually query each of the $\theta-1$ sketches instead of merging them. Hence in this case,
\[V_{\theta}^{\method} = max(V_{r}^{\method})   \text{ where } r \in [1, \theta -1 ]\]

Hence, as per Equation.~(\ref{eq:CMSBounds}) we compare the estimates of the count retrieved from the current Sum $CS$ CMS sketch of partition $\theta$ for $\method$ and $\methodmidas$.

\begin{equation}
\text{MIDAS Estimate} : \hat{a_{\theta}} \leq a + \epsilon*(\sum_{r=1}^{r=\theta-1}\text{count}_{r})
    \label{eq:MIDAS Estimate}
\end{equation}

\begin{equation}
   \text{MDistrib Estimate} :\hat{a_{\theta}} \leq a_{\theta} + \epsilon*\max(\text{count}_{1}, \dots \text{count}_{\theta -1 })
   \label{eq:MDistrib Estimate}
\end{equation}

Since such counts stored in the CMS being non-negative

\begin{equation}
\max(\text{count}_{1}, \dots \text{count}_{\theta -1 }) <= \sum_{r=1}^{r=\theta-1}\text{count}_{r}
\label{eq:positiveInequality}
\end{equation}

Hence, taking the inequality from \ref{eq:positiveInequality} we can conclude that the error bound in the estimate provided by $\method$ is much less than that provided by $\methodmidas$.

\begin{equation}
 \begin{aligned}
 \hat{a_{\theta}} \leq a_{\theta} + \epsilon*\max(\text{count}_{1}, \dots \text{count}_{\theta -1 }) \\
 \leq a_{\theta} + \epsilon*\sum_{r=1}^{r=\theta-1}\text{count}_{r}
 \label{eq: better guarantee CMS}
\end{aligned}
\end{equation}
\qed
\end{proof}

\textbf{Apache Frequent Item Sketch Guarantees}
In case of a Frequent Item Sketch with configuration $<M, N>$, where $M$ indicates the maximum map size in power of 2 and $N$ being the total weights stored in the map,  the error in the \textit{estimate} of the frequency of a particular edge is always bounded within some $T$, $\hat{a} \leq a + T$, where $T$ can be as follows, given $\alpha$ is the load factor :

\begin{equation}
    T=
    \begin{cases}
      0, & \text{if}\ N < \alpha * M \\
      \epsilon*N, & \text{otherwise}
    \end{cases}
  \end{equation}

Since Frequent Item Sketch has a similar structure as a Count Min Sketch in terms of the error estimate, the proof for the false positive probability bound also holds here. 

\begin{LEM}
The error estimate encountered via merging two sketches $A$ , $B$ is the same as that obtained in an original sketch $C$ which contains frequencies all items that have been inserted to $A$ and $B$ as a whole.
\end{LEM}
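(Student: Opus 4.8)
The plan is to exploit the linearity of the underlying counter array together with the fact that both sketches are built with identical hash functions and identical dimensions, which is exactly the precondition encoded by the mergeability property $sketch(A + B) \approx sketch(A) \cup sketch(B)$ stated earlier. I would first fix a Count-Min Sketch of dimensions $w \times d$ with shared hash functions $h_1, \dots, h_d$, define the merge of $A$ and $B$ as the cell-wise sum, and show that this merged array is bit-for-bit identical to the array one obtains by inserting the concatenation of the two input streams into a fresh sketch $C$.

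Concretely, for any row $j \in \{1,\dots,d\}$ and bucket $k \in \{1,\dots,w\}$, the value stored by $A$ is $A[j][k] = \sum_{i : h_j(i)=k} a_i^A$, where $a_i^A$ is the true count of item $i$ in the stream feeding $A$, and likewise for $B$. Summing the two arrays cell-wise gives $A[j][k] + B[j][k] = \sum_{i : h_j(i)=k} (a_i^A + a_i^B)$. On the other hand, the combined stream assigns each item the true count $a_i^C = a_i^A + a_i^B$, so a sketch built directly on it stores $C[j][k] = \sum_{i : h_j(i)=k} a_i^C$, which is the same expression. Since the estimate $\hat{a}_i = \min_j \text{Count}[j][h_j(i)]$ is a deterministic function of the counter cells, identical cells force identical estimates; and the error bound of Equation~(\ref{eq:CMSBounds}) depends only on $\epsilon$ and the total mass $V$, which is conserved under merging because $\sum_i a_i^C = \sum_i a_i^A + \sum_i a_i^B$. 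Hence both the point estimates and the worst-case error $\epsilon V$ coincide, which establishes the claim for the CMS-backed instantiation $\method_{CMS}$.

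For the Frequent Item Sketch instantiation $\method_{FIS}$ the counter structure is no longer a plain linear array: merging invokes a union of the two maps followed by a size-reduction (decrement) step, so I cannot argue a literal cell-wise identity. Instead I would invoke the documented mergeability of the Apache Datasketches framework and show that the post-merge error threshold $T$ retains the piecewise form stated earlier, now evaluated at the combined weight $N = N_A + N_B$: the union preserves total weight, and the decrement removes at most $\epsilon N$ from any counter, so the guarantee $\hat{a} \le a + \epsilon N$ for the merged sketch matches the guarantee of a fresh sketch built on the union at the same $N$. The hard part will be precisely this FIS case — for CMS the result is an elementary consequence of linearity, but for FIS one must verify that the truncation performed during the union does not inflate the error beyond $\epsilon N$, i.e. that the worst-case truncation error of the merged map is no larger than that of a directly-constructed map of equal total weight. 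This is the single step where the argument genuinely relies on the structural mergeability property rather than on simple arithmetic.
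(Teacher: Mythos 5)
Your proposal is correct, but it takes a genuinely different route from the paper's. The paper argues at the level of the error bounds themselves: it writes down $\hat{a_A} \leq a + \epsilon V_A$ and $\hat{a_B} \leq a + \epsilon V_B$, adds them to get $\hat{a_A} + \hat{a_B} \leq 2a + \epsilon (V_A + V_B)$, and then identifies $\hat{a_A} + \hat{a_B}$ with $\hat{a_{A+B}}$ and $V_A + V_B$ with $V_C$ to conclude the merged and directly-built sketches obey the same bound. You instead work one level lower, at the counter array: you show the cell-wise sum of $A$ and $B$ is literally the array that $C$ would contain when fed the concatenated stream, and only then invoke the standard CMS guarantee once on $C$. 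Your route buys two things the paper's does not. First, rigor: the paper's key identity $\hat{a_A} + \hat{a_B} = \hat{a_{A+B}}$ is actually an inequality in the unhelpful direction, since the CMS estimate is a minimum over rows and $\min_j(x_j + y_j) \geq \min_j x_j + \min_j y_j$; summing the two individual bounds therefore does not by itself bound the merged estimate. Your cell-wise identity sidesteps this entirely, and also avoids having to union-bound two probabilistic events, since the merged array \emph{is} a sketch of the combined stream and inherits the single $1-\delta$ guarantee directly. Second, generality: the paper assumes the same item carries the same true count $a$ in both sketches, whereas you allow arbitrary per-stream counts $a_i^A$ and $a_i^B$. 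What the paper's approach buys in exchange is brevity and uniformity with its surrounding lemmas, which are all phrased as manipulations of the $\hat{a} \leq a + \epsilon V$ inequality. One caveat: the paper's proof of this lemma covers only the Count-Min case, so your FIS discussion goes beyond it; you are right that the truncation step in the Frequent Items union is the genuinely nontrivial part there, and that piece would still need to be carried out rather than merely cited.
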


\textbf{Establishing for Count Min Sketch}

Considering count min sketches $A$ and $B$ having dimensions $w \times d$ each and inserting an item with actual frequency $a$ in both the sketches, then the error estimate of a particular item can be represented as follows

\begin{equation}
\begin{aligned}
\hat{a_A} \leq a + \epsilon*V_{A}
\label{eq:CMSBounds for A}
\end{aligned}
\end{equation}

\begin{equation}
\begin{aligned}
\hat{a_B} \leq a + \epsilon*V_{B}
\label{eq:CMSBounds for B}
\end{aligned}
\end{equation}

where $V_A$ and $V_B$ indicate the total sum of all the counts stored in CMS $A$ and $B$ respectively.

By combining Equations.~(\ref{eq:CMSBounds for A}) and (\ref{eq:CMSBounds for B}), and since all the elements involved are ($\geq 0$), we get the following.

\begin{subequations}
 \label{eq1Merge}
 \begin{align}
  \hat{a_A} + \hat{a_B}\leq 2a + \epsilon*(V_{A} +V_{B}) \label{eq11} \\
  \hat{a_{A+B}} \leq 2a + \epsilon*(V_{A} +V_{B}) \label{eq12} \\
   \hat{a_{C}} \leq 2a + \epsilon*(V_{C}) \label{eq13}
 \end{align}
\end{subequations}

where $\hat{a_A} + \hat{a_B} = \hat{a_{A+B}}$ indicates the total estimated frequency of that particular item, had that item be inserted into the sketch $C$ having same dimension $wxd$ which is a combination of $A$ and $B$. Hence by this we showcase that the error estimate remains the same even after merging two Count Min Sketches compared to an original Count Min sketch which represents a summation of the two other sketches.

\subsection{Why Distributed MIDAS using Apache Frequent Item sketch is better than MIDAS using CMS}

Frequent Item Sketch only stores those items at any point that's larger in freq than the global median frequency, whereas CMS keeps on storing Estimation Errors. Dimitropoulos et al. have introduced the concept of transient keys, which are items that are not active and have low frequency relative to the group, much earlier \cite{dimitropoulos2008eternal}.


\subsection{Space Complexity}

\textbf{Constant Space for \method-CMS}

Considering $K$ distributed partitions in our system, at any point we have only $K$ current Edge $CE$ CMS sketches. After the end of $K$ partitions, again when we consider the next $K$ partitions, in order to store the previous iteration's sum we maintain another CMS.
Hence the total space complexity $(K+1)*(wd)$ where $w$ and $d$ indicate the number of hash functions and buckets in the CMS structure. 

\textbf{Constant Space for \method-FIS}

Similarly for Frequent Item Sketch, the total space complexity can be denoted as $(K+1)*(M)$ where $M$ is the maximum map size of the data sketch.

\subsection{Proofs}

\textbf{Proposition 1 : False Positive Probability Error is more tightly bound in case of \method}

\vspace{2pt}

\textit{Count Min Sketch}

As we know for a particular partition $\theta$, our chi-square statistic can be written as

\[ \hat{\chi_{\theta}^2}  = (\hat{a^{\theta}_{uv}} -  \frac{\hat{s^{\theta}_{uv}}}{t})^2 \frac{t^2}{\hat{s^{\theta}_{uv}}(t-1)}    \]
  
\textbf{Theorem 1 (False Positive Probability Bound)}

Let $\chi^{2}_{1 - \frac{\epsilon}{2}}(1)$ be the $(1-\frac{\epsilon}{2})$-quantile of a chi-squared random variable with 1 degree of freedom. Then:

\begin{equation}
\begin{aligned}
P(\hat{\chi^{2}} > \chi^{2}_{1 - \frac{\delta}{2}}(1)) < \delta 
\label{eq: chistatisticQuantile}
\end{aligned}    
\end{equation}

\cite{bhatia2020midas} mentions that the motivation of the above theorem proposal is to show that with a threshold of $\chi^{2}_{1 - \frac{\epsilon}{2}}(1)$,  $\chi^{2}$ as a test statistic results in a false positive probability of at most $\delta$.

\vspace{3pt}

For CMS guarantees \ref{eq:CMSBounds} we have the following

\begin{equation}
\begin{aligned}
P(\hat{a} \leq a + \tau) \geq 1 - \frac{\delta}{2}
\label{eq :cmsGuaranteeBound}
\end{aligned}    
\end{equation}

where $\epsilon*V$ is shortened as $\tau$

We define an adjusted version of our earlier score for the total summation score :

\begin{equation}
    \Tilde{s^{\theta}_{uv}} = \hat{s^{\theta}_{uv}} - \tau
    \label{eq:updatedEquationSum}
\end{equation}

By combining Equations.~\ref{eq: chistatisticQuantile} and \ref{eq :cmsGuaranteeBound} by union bound, with probability of at least $1-\delta$, we obtain the following.

\[
 \Tilde{\chi_{\theta}^2}  = (\hat{a^{\theta}_{uv}} -  \frac{\Tilde{s^{\theta}_{uv}}}{t})^2 \frac{t^2}{\Tilde{s^{\theta}_{uv}}(t-1)}    \\
 \]
 
We denote $\Tilde{\chi_{\theta_{MIDAS}}^2}$ and $\Tilde{\chi_{\theta_{MDISTRIB}}^2}$ as the chi-square test statistic for  $\methodmidas$ and $\method$ methodologies respectively.

 \begin{subequations}
 \label{eq1ChiMIDAS}
 \begin{align}
 \Tilde{\chi_{\theta_{MIDAS}}^2}  = (\hat{a^{\theta}_{uv}} -  \frac{\Tilde{s^{\theta}_{uv_{MIDAS}}}}{t})^2 \frac{t^2}{\Tilde{s^{\theta}_{uv_{MIDAS}}}(t-1)}  \label{eqChiMIDAS} \\
 \Tilde{\chi_{\theta_{MDISTRIB}}^2}  = (\hat{a^{\theta}_{uv}} -  \frac{\Tilde{s^{\theta}_{uv_{MDISTRIB}}}}{t})^2 \frac{t^2}{\Tilde{s^{\theta}_{uv_{MDISTRIB}}}(t-1)}  \label{eqChiMDistrib} \\
 \end{align}
\end{subequations}

We simplify the equation for $\method$ as it holds the same for $\methodmidas$. From Equation.~(\ref{eq:updatedEquationSum}), Equation.~(\ref{eqChiMDistrib}) can be simplified as follows.

\begin{equation}
\begin{aligned}
    \Tilde{\chi_{\theta_{MDISTRIB}}^2}  = \\ (\hat{a^{\theta}_{uv}} -  \frac{\hat{s^{\theta}_{uv_{MDISTRIB}}} - \tau}{t})^2  \frac{t^2}{\Tilde{s^{\theta}_{uv_{MDISTRIB}}}(t-1)} 
\end{aligned}    
\end{equation}

Considering $a^{\theta}_{uv} \approx \gamma\frac{s^{\theta}_{uv}}{t}$ under the assumption that the current edge count is always an estimation of the previous mean plus some positive weight factor,

\begin{equation}
\begin{aligned}
\Tilde{\chi_{\theta_{MDISTRIB}}^2}  = \\ ( \gamma\frac{\hat{s^{\theta}_{uv_{MDISTRIB}}}}{t} -  \frac{\hat{s^{\theta}_{uv_{MDISTRIB}}}  - \tau}{t})^2 \frac{t^2}{\Tilde{s^{\theta}_{uv_{MDISTRIB}}}(t-1)}     
\end{aligned}    
\end{equation}

which can be simplified as considering large value of $t$, that is for higher timestamps.

\[
\Tilde{\chi_{\theta_{MDISTRIB}}^2}  \approx \hat{s^{\theta}_{uv_{MDISTRIB}}}(\frac{\gamma -1}{t})\frac{t^2}{(t-1)}
\]

\begin{equation}
    \Tilde{\chi_{\theta_{MDISTRIB}}^2}  \approx \hat{s^{\theta}_{uv_{MDISTRIB}}}(\gamma -1)
\end{equation}

\begin{equation}
    \Tilde{\chi_{\theta_{MIDAS}}^2}  \approx \hat{s^{\theta}_{uv_{MIDAS}}}(\gamma -1)
\end{equation}

As per Lemma~\ref{mdsitrib-cs-error-gurantee}, we know that ${\hat{s^{\theta}_{uv_{MDISTRIB}}}} \leq {\hat{s^{\theta}_{uv_{MIDAS}}}}$

Hence, by combining both the equations, we get $\Tilde{\chi_{\theta_{MDISTRIB}}^2} < \Tilde{\chi_{\theta_{MIDAS}}^2}$




\section{Experiments}
In this section, we evaluate our proposed anomaly scoring algorithm \method\ and address to answer the following questions:

\begin{enumerate}[label=\textbf{Q\arabic*.}]
\item {\bf Accuracy:} How accurately does \method\ and its variants detect real-world anomalies compared to baselines, as evaluated using the ground truth labels?

\item {\bf Scalability:} How does it scale with input stream length? How does the time needed to process each input compare to baseline approaches?

\item {\bf Execution Time:} How fast does the proposed method perform with respect to baseline methods while generating anomaly scores in the offline phase?

\end{enumerate}

\paragraph{Baselines: MIDAS-R, SedanSpot,  PENnimer, F-Fade}

\paragraph{Datasets}

In this section, we evaluate performance of $\method$ as compared to other baselines in Table \ref{Table : AUC} based on 5 real world datasets :
\emph{DARPA} \cite{lippmann1999results} and \emph{ISCX-IDS2012} are popular datasets for graph anomaly detection. According to survey [52], there has been proposal of more than 30 datasets and it has been recommended to use \emph{CIC-DDoS2019} and \emph{CIC-IDS2018} datasets. We also showcase the same for \emph{CTU-13} dataset. 

\begin{table}[h!]
\centering
\resizebox{0.7\columnwidth}{!}{%
\begin{tabular}{l|c|c|c}
\hline
Datasets & |V| & |E| & |T| \\
\hline
DARPA & 25,525 & 4,554,344 &  46,567 \\
ISCX-IDS2012 & 30,917 & 1,097,070 & 165,043 \\
CIC-IDS2018 & 33,176 & 7,948,748 &  38,478 \\
CIC-DDoS2019 & 1,290 & 20,364,525 & 12,224 \\
CTU-13  & 371,100 & 2,521,286 & 33,090 \\
\end{tabular}%
}
\caption{Statistics of the datasets.}
\label{Table : data statistics}
\end{table}

Table \ref{Table : data statistics} shows the statistical summary of the datasets. |E| corresponds to the total number of edge
records, |V| and |T| are the number of unique nodes and unique timestamps, respectively.

\subsubsection{Hyperparameters chosen for baselines}

We hereby mention the hyperparameters used by the corresponding baselines.

\vspace{2pt}

\textbf{Spotlight}

\begin{itemize}
\item \text{sample size} = 10000
\item \text{num walk} = 50
\item \text{restart prob} =  0.15
\end{itemize}

\vspace{2pt}

\textbf{MIDAS}
The size of CMSs is 2 rows by 1024 columns for all the tests. For MIDAS-R, the decay factor $\alpha$ = 0.6.

\textbf{PENimer}
\begin{itemize}
    \item ws =1 
    \item ms =1
    \item alpha = 1
    \item beta =1
    \item gamma =1
\end{itemize}

\vspace{2pt}

\textbf{F-Fade}
\begin{itemize}
    \item embedding size = 20
    \item $W_{upd}$ = 720
    \item $T_{th}$ = 120
    \item alpha = 0.999
    \item M =100
\end{itemize}

For $t_{setup}$, we always use the timestamp value at the 10th percentile of the dataset.

\vspace{2pt}

All methods output an anomaly score corresponding to the incoming edge, a high anomaly score implying more anomalousness. As part of the experiments, we follow the same suite as the baseline papers in reporting the Area under the ROC Curve (using the True Positive Rate TPR and the False Positive Rate FPR) and the corresponding running time.  All experiments for calculating the AUC are averaged for 8 runs and the mean is reported along with its spread.

\subsection{Experimental Setup}
All experiments are carried out on a $1.00 GHz$ Intel Core $i5$ processor, $8 GB$ RAM, running OS $Win10$ $10.0.19042$. We used an open-sourced implementation of MIDAS, provided by the authors, following parameter settings as suggested in the original paper ($2$ hash functions, $719$ buckets). We implement $\method$ in python using dask framework. We follow the same parameter settings as $\methodmidas$ ($2$ hash functions, $719$ buckets).

\subsection{Q1. Accuracy}

Table \ref{Table : AUC} includes the AUC of edge anomaly detection baselines along with our two proposed methods $\method_{CMS}-R$ and $\method_{FIS}-R$.
PENimer is unable to finish in large datasets like CIC-DDos and CIC-IDS within 10 hours range, thus we do not report the result. Among the baseline results $\method_{FIS}-R$ and $\method_{CMS}-R$ both provide better accuracy results than the baselines on datasets like DARPA, CIC-DDoS and ISCX-IDS.

\textbf{AUC vs Running Time}
Figure \ref{fig:rocvstime}. plots accuracy (AUC) vs. running time (log scale, in seconds) on the Darpa dataset. As we can see both $\method_{FIS}-R$ and $\method_{CMS}-R$ achieve higher accuracy results than the base lines in considerable less time.

\begin{figure}[h!]
    \centering
    \includegraphics[width=0.98\linewidth, height=0.35\columnwidth]{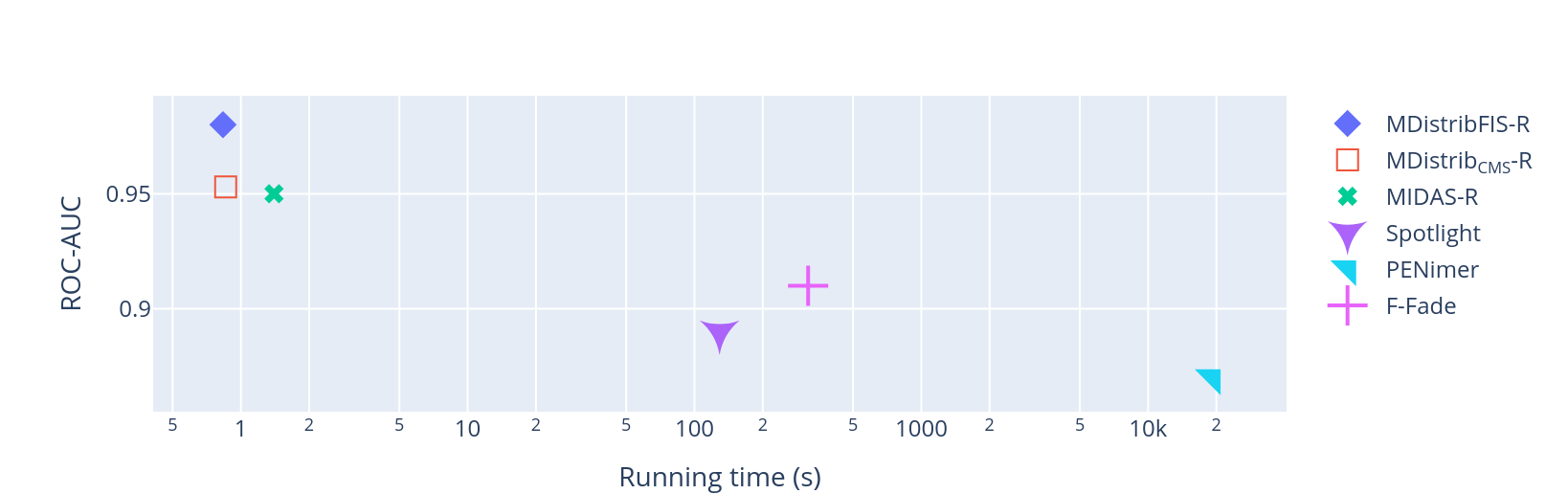}
    \caption{AUC vs running time when detecting edge anomalies from the DARPA dataset.}
    \label{fig:rocvstime}
\end{figure}

\begin{table*}[h!]
\centering
\resizebox{1.4\columnwidth}{!}{%
\begin{tabular}{|l|c|c|c|c|c|}
\hline
Methods & \textbf{DARPA} & \textbf{CIC-DDoS} & \textbf{CIC-IDS} & \textbf{CTU-13} & \textbf{ISCX-IDS} \\\hline
$MDistrib_{CMS}$-R & 0.953 $\pm$ 0.001  & 0.986 $\pm$ 0.003 & \textbf{ 0.979} $\pm$ \textbf{0.01} & 0.908 $\pm$ 0.002 & 0.806 $\pm$ 0.001 \\
\hline
Spotlight & 0.89  $\pm$ 0.001 & 0.67 $\pm$ 0.002 & 0.42 $\pm$ 0.003 & 0.74 $\pm$ 0.002 & 0.63 $\pm$ 0.001 \\
\hline
MIDAS-R & 0.95  $\pm$ 0.005 & 0.984 $\pm$ 0.003 & 0.96 $\pm$ 0.01 & 0.973 $\pm$ 0.005 & 0.81 \\
\hline
PENimer & 0.87 & - & - & 0.604 & 0.51 \\
\hline
F-Fade & 0.91 $\pm$ 0.005 & 0.72 $\pm$ 0.02 & 0.61 $\pm$ 0.03 & 0.802 $\pm$ 0.001 & 0.62 $\pm$ 0.004 \\
\hline
$MDistrib_{FIS}$-R & \textbf{0.98} $\pm$ \textbf{0.0004} & \textbf{0.986} $\pm$ \textbf{0.001} & 0.96 $\pm$ 0.0005 & \textbf{0.92} $\pm$ \textbf{0.004} & \textbf{0.93} $\pm$ \textbf{0.0001} \\
\hline
\end{tabular}
}
\caption{AUC scores of the proposed methods in comparison with the edge anomaly detection baselines. The best values over a column (for a specific dataset) are shown in bold.}
\label{Table : AUC}
\end{table*}

\begin{table*}[h!]
\centering
\resizebox{1.4\columnwidth}{!}{%
\begin{tabular}{|l|c|c|c|c|c|}
\hline
Methods & \textbf{DARPA} & \textbf{CIC-DDoS} & \textbf{CIC-IDS} & \textbf{CTU-13} & \textbf{ISCX-IDS} \\\hline
$MDistrib_{CMS}$-R & 857ms  & 971ms & 741ms & 436ms & 3.7s \\
\hline
Spotlight & 129.1s & 697.6s & 209.6s & 86s & 19.5s \\
\hline
MIDAS-R & 1.4s & 2.2s & 1.1s  & 0.8s &  5.3s \\
\hline
PENimer & 5.21hrs & $\geq$ 24hrs & 10hrs & 8hrs & 1.3hrs \\
\hline
F-Fade & 317.8s  & 18.7s & 279.7s & 19.2s & 137.4s \\
\hline
$MDistrib_{FIS}$-R & 832ms & 962ms & 764.8s & 441ms & 3.1s \\
\hline
\end{tabular}%
}
\caption{Running time when detecting edge anomalies.}
\label{Table : TimeTaken}
\end{table*}

    `





\subsection{Q2. Scalability}

\begin{figure}[h!]
\centering
\includegraphics[width=1.0\linewidth, height=0.5\columnwidth]{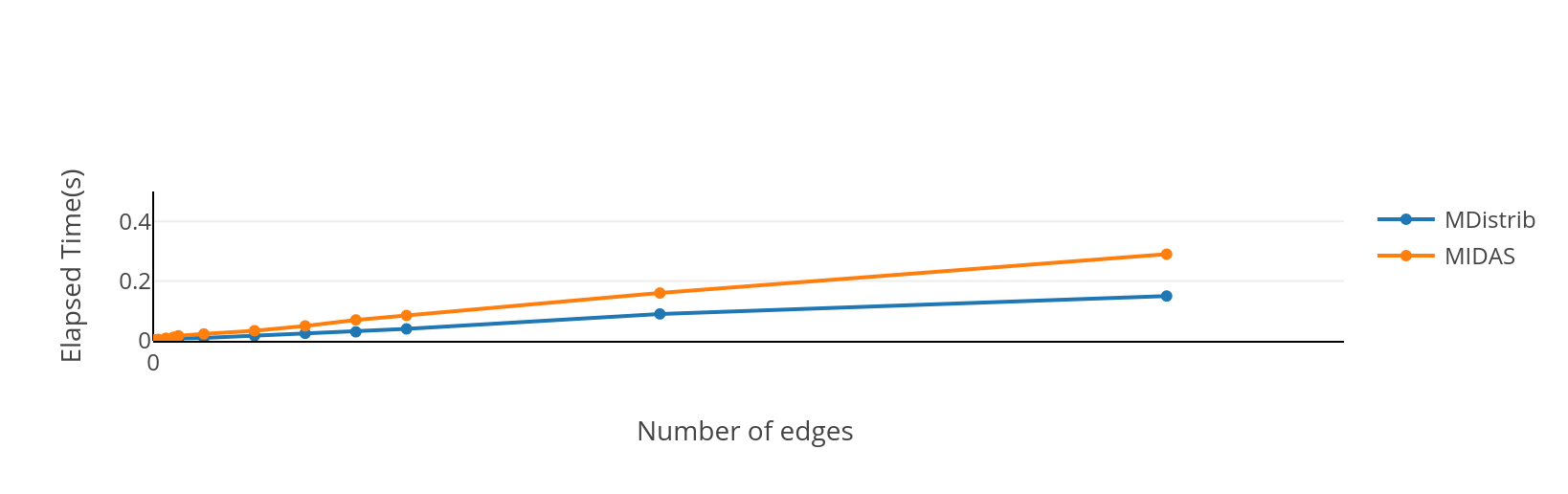}
\caption{\method w.r.t MIDAS - Scalability with respect to the number of edges}
\label{fig:midasDistScalability}
\end{figure}

In Figure \ref{fig:midasDistScalability} we showcase how via distributed processing of individual nodes and computing the sketch results in significant time improvement of execution. 
We plot the wall clock time needed to run the first $2^{12}, 2^{13}, 2^{14} \dots 2^{22}$ number of edges of the DARPA dataset for both $\method$ and $\methodmidas$.
At the same time, we showcase that the number of edges increase exponentially, $\method$ shows linearity of increase in execution time similarly as compared to $\methodmidas$ for number of partitions = 1463.

\begin{figure}[h!]
\centering
\begin{subfigure}{.8\linewidth}
  \centering
  \includegraphics[width=0.98\linewidth, height=0.5\linewidth]{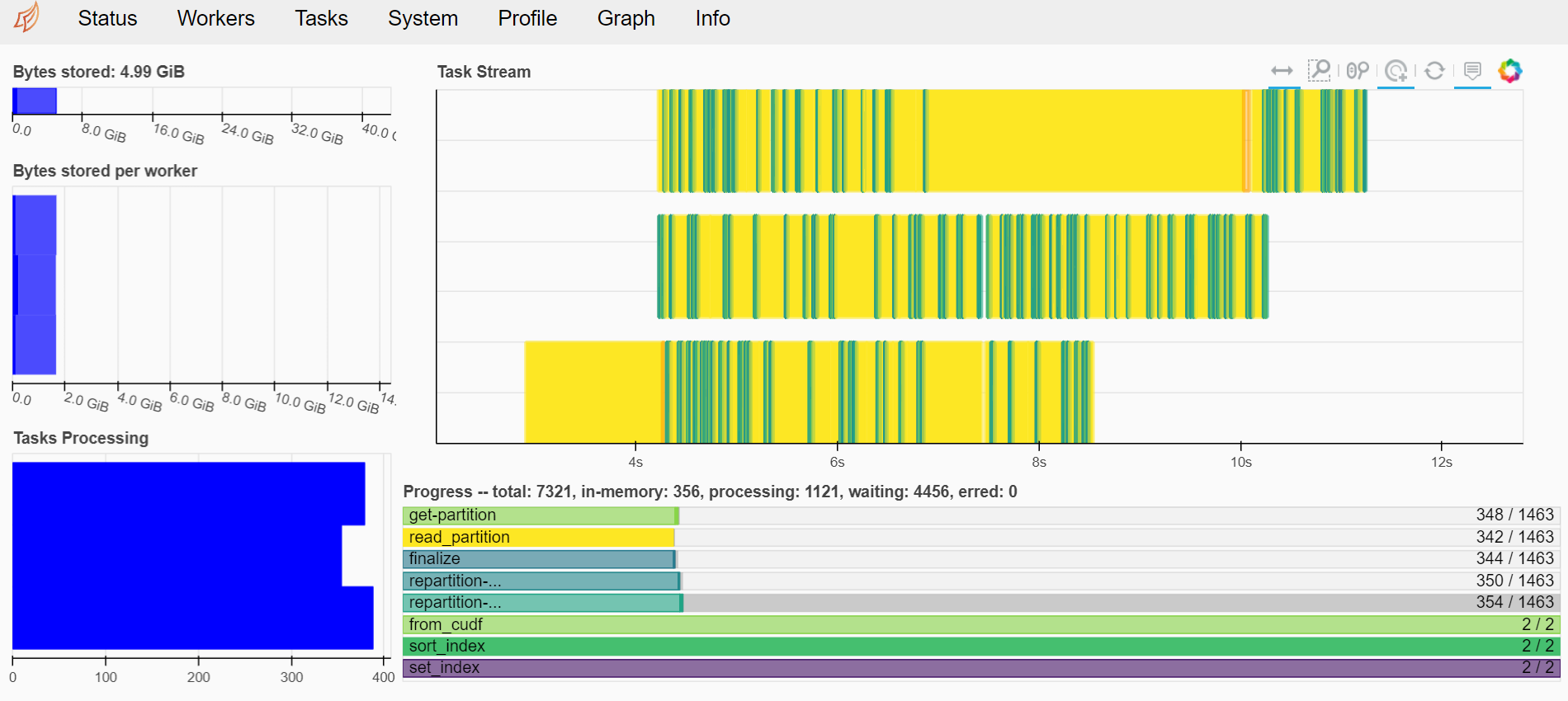}
  \caption{Saturn Cloud DashBoard Status for workers = 3}
  \label{fig:dashboard3}
\end{subfigure}%

\begin{subfigure}{.8\linewidth}
  \centering
  \includegraphics[width=.98\linewidth, height=0.45\columnwidth]{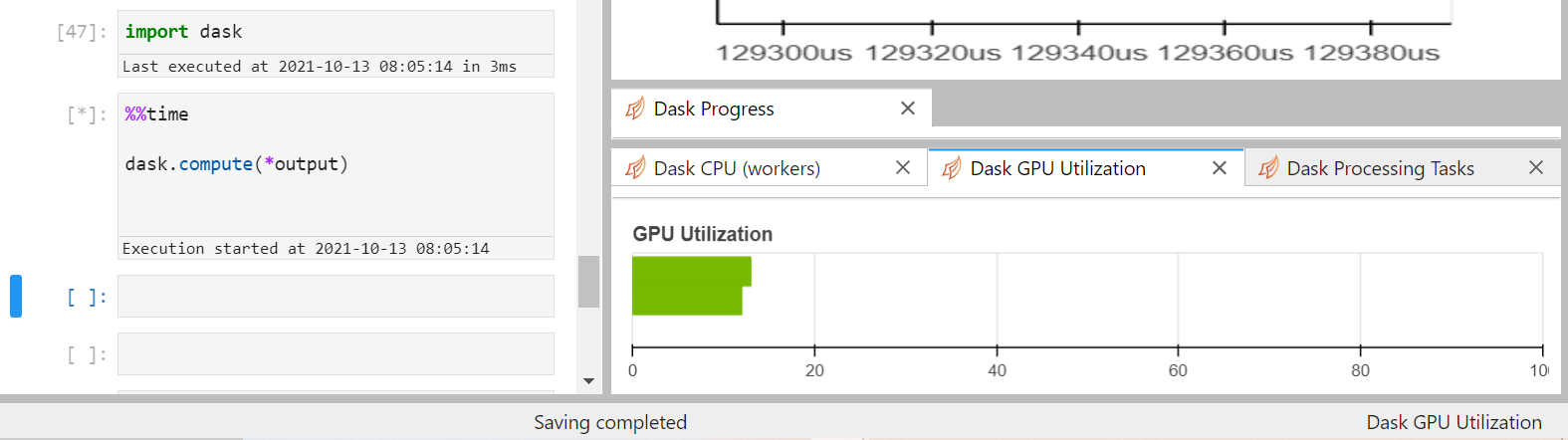}
  \caption{GPU Utilization}
  \label{fig:sub2}
\end{subfigure}

\caption{Running MDistrib in a GPU environment with Saturn Cloud}
\label{fig:prefaceComm}
\end{figure}

\begin{figure}[h!]
\centering
\begin{subfigure}{.96\linewidth}
  \centering
  \includegraphics[width=0.88\linewidth, height=0.45\linewidth]{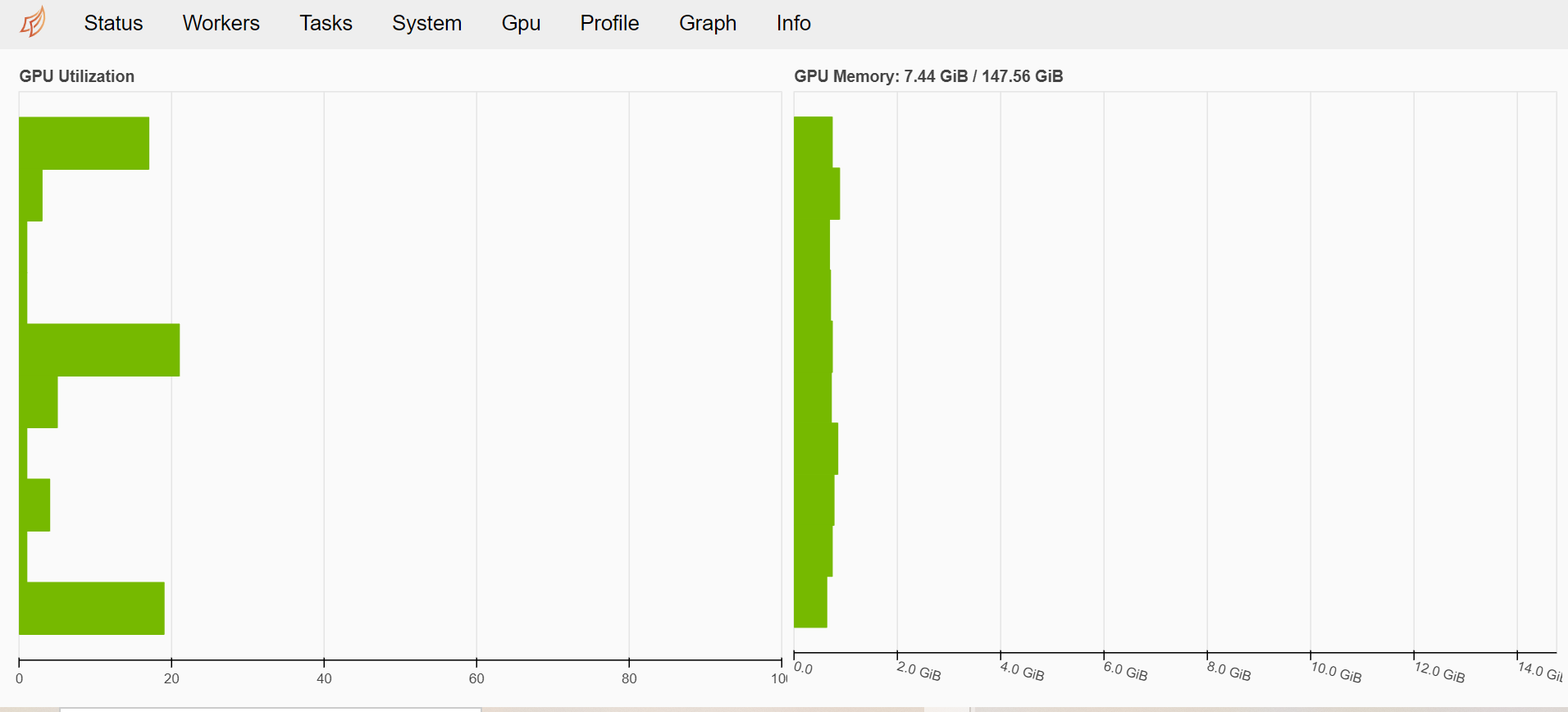}
  \caption{GPU utilization}
  \label{fig:gpuutilization10}
\end{subfigure}%

\begin{subfigure}{.97\linewidth}
  \centering
  \includegraphics[width=.94\linewidth, height=0.45\columnwidth]{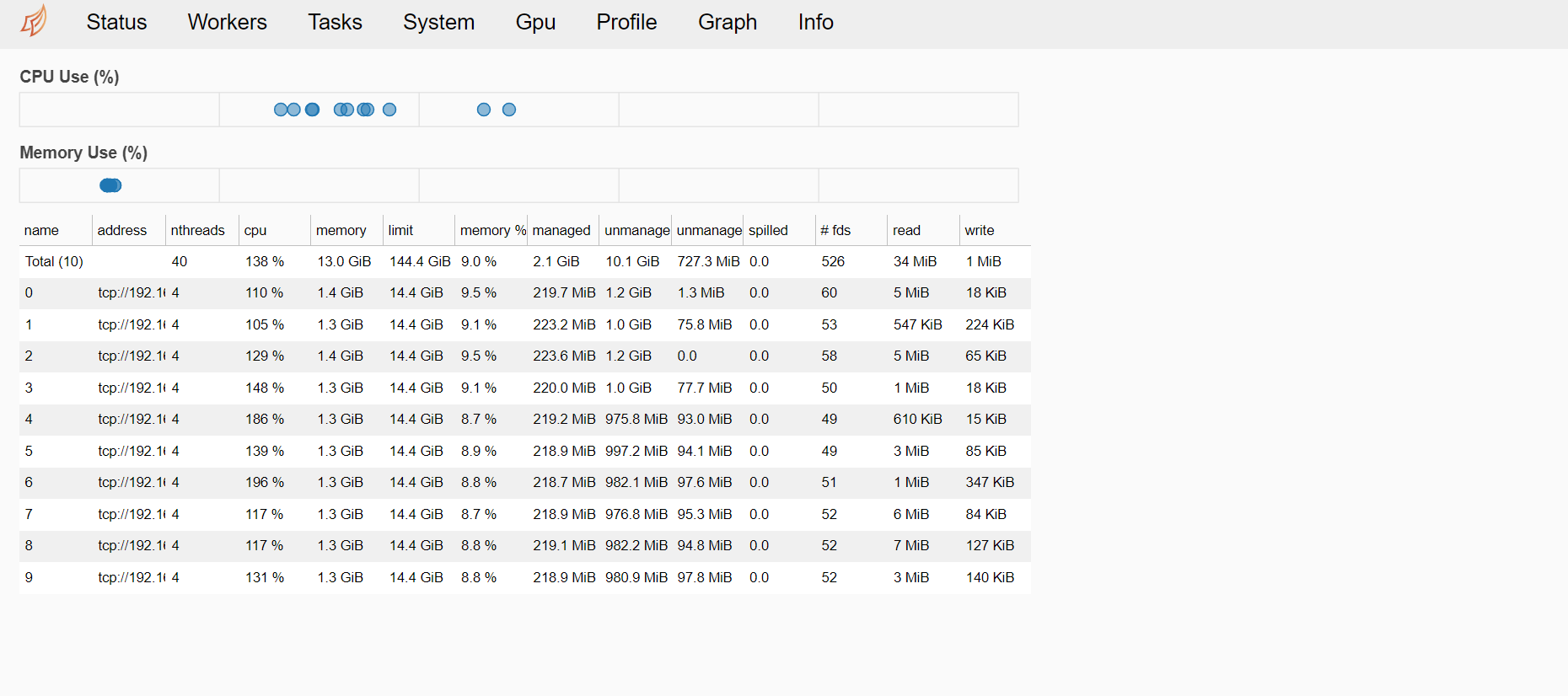}
  \caption{Resource utilization for 10 workers}
  \label{fig:resource10}
\end{subfigure}%

\caption{Results obtained by setting the number of workers to 10 in SaturnCloud.}
\label{fig:Overall worker status}
\end{figure}

\begin{figure}[h!]
  \centering
  \includegraphics[width=.84\linewidth, height=0.45\columnwidth]{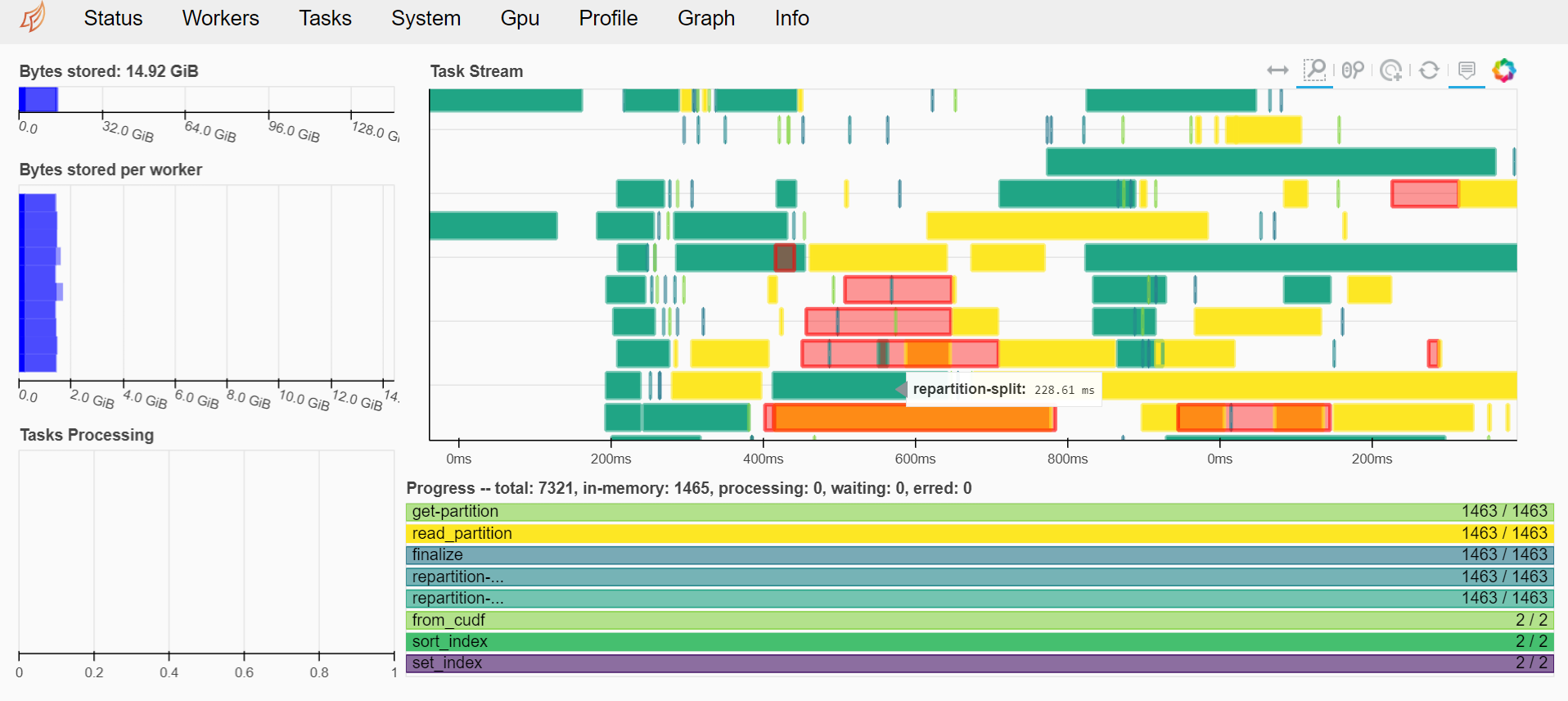}
  \caption{Task progress status}
  \label{fig:last_dashboard10}
\end{figure}

\subsection{Q3. Execution : GPU based Execution}

In addition to our proposed solution, we also extend $\method$ to a GPU setting. For this we created a dask cluster hosted on Saturn cloud\footnote{\href{https://saturncloud.io/}{https://saturncloud.io/}}  and ran our analysis on the same. We use Python's JupyterLab for running our experiments with server configuration of \textit{T4-XLarge, with 4 crores, 16GB of RAM and 1 GPU}. In addition we use cuDF\footnote{\href{https://docs.rapids.ai/api/cudf/stable/}{https://docs.rapids.ai/api/cudf/stable/}}, python's GPU dataframe library to store our data. We showcase here the experiments done for the DARPA dataset here. Using the above configurations and setting $workers = 3$, we achieved an execution time of \textbf{672ms} which is $1.37$ times faster than $\method$ in non-GPU environment.

As a separate experiment, we increased the number of workers to 10 and achieved execution time of 438ms on DARPA dataset. Further we showcase the GPU utilization and the task status for each of the 10 worker threads. Fig.~\ref{fig:gpuutilization10} shows the GPU utilization for $\method$ using partitions = 1463 and number of workers = 10.

Figure.\ref{fig:dashboard3} and Figure.\ref{fig:last_dashboard10} shows the task status for worker numbers =3 and 10 respectively. Each of the task status depicts the number of underlying processes involved like reading data from a partition and processing it and writing it. Figure \ref{fig:resource10} thus depicts the resources like number of threads or cpu cycles being utilized during the program execution by each of the 10 workers.

\section{Conclusion}

In this paper, we proposed \method which uses a distributed strategy to detect anomalous edges in an offline setting. We further showcase that through our distributed sketch architecture, we achieve a better false positive theoretical guarantee and as well as better performance in terms of running time as compared to state of the art methods for large graphs (billion number of edges). We also extend our method to GPU enabled environment to showcase the applicability of our system in high performance computing environments. Lastly we showcased a generalized model of our framework that can use any frequency based sketch and achieve good accuracy results in detecting suspicious anomalies.

\bibliographystyle{plain}
\bibliography{sample}

\end{document}